\providecommand{\customgenericname}{}
\newcommand{\newcustomtheorem}[2]{%
  \newenvironment{#1}[1]
  {%
   \renewcommand\customgenericname{#2 }%
   \renewcommand\theinnercustomgeneric{##1 }%
   \innercustomgeneric
  }
  {\endinnercustomgeneric}
}
\newtheorem{lemma}{Lemma}
\begin{document}

\title{\textbf{A Correlation Maximization Approach for Cross Domain Co-Embeddings}}
\author{
Dan Shiebler\\
Twitter Cortex\\
dshiebler@twitter.com
}
 \date{} 
\maketitle

\newcommand{\simUBij}{sim(e_{U_{i_B}}, e_{B_j})}
\newcommand{\simUBijBig}{sim\big(e_{U_{i_B}}, e_{B_j}\big)}

\newcommand{\Lstar}{L^{\ast}}

\newcommand{\Li}{L_{i}}
\newcommand{\Lihat}{\widehat{L_{i}}}

\newcommand{\Lcorr}{L^{c}}
\newcommand{\Licorr}{L_i^{c}}
\newcommand{\LcorrS}{L^{c_S}}
\newcommand{\LicorrS}{L_i^{c_S}}

\newcommand{\sumPMinusISquare}{\sum\limits_{}^j \left(P_{ij} - Y_{ij}\right)^{2}}

\newcommand{\sumPMinusISquareNormalized}{\sum\limits_{}^j \left(P_{ij} - \widehat{Y_{ij}}\right)^{2}}

\newcommand{\sumSQRTPMinusISquareNormalized}{\sum\limits_{}^j \sqrt{\left(P_{ij} - \widehat{Y_{ij}}\right)^{2}}}

% hat commands
\newcommand{\SumIhatminusIbar}{\sum\limits_{}^j \bigg(\widehat{Y}_{ij} - \overline{\widehat{Y}_{i}}\bigg)^2}

\newcommand{\SumPhatminusPbar}{\sum\limits_{}^j \bigg(\widehat{P}_{ij} - \overline{\widehat{P}_{i}}\bigg)^2}

\newcommand{\SumIhatminusIbarS}{\sum\limits_{}^{j\in S} \bigg(\widehat{Y}_{ij} - \overline{\widehat{Y}_{iS}}\bigg)^2}

\newcommand{\SumPhatminusPbarS}{\sum\limits_{}^{j\in S} \bigg(\widehat{P}_{ij} - \overline{\widehat{P}_{iS}}\bigg)^2}

\newcommand{\SumPhatminusPbarIhatminusIbar}
{
\sum\limits_{}^{j\in S} \bigg(\widehat{P}_{ij} - \overline{\widehat{P}_{i}}\bigg) \bigg(\widehat{Y}_{ij} - \overline{\widehat{Y}_{i}}\bigg)
}

% non-hat commands
\newcommand{\SumIminusIbar}{\sum\limits_{}^j \bigg(Y_{ij} - \overline{Y_{i}}\bigg)^2}

\newcommand{\SumPminusPbar}{\sum\limits_{}^j \bigg(P_{ij} - \overline{P_{i}}\bigg)^2}

% Pearson loss commands with full set

\newcommand{\SumPminusPbarIminusIbar}
{
\sum\limits_{}^{j} \bigg(P_{ij} - \overline{P_{i}}\bigg) \bigg(Y_{ij} - \overline{Y_{i}}\bigg)
}

\newcommand{\perUserLoss}
{
\left(1 - \frac {
\SumPminusPbarIminusIbar
}
{
  \sqrt{
    \SumPminusPbar
    \:
    \SumIminusIbar
  }
}
\right)
}

\newcommand{\gradientPearsonLoss}{
-\left(
\frac {
\bigg(Y_{ij} - \overline{Y_{i}}\bigg) 
-
 	\frac {\SumPminusPbarIminusIbar}
    	  {\SumPminusPbar}
 \bigg(P_{ij} - \overline{P_{i}}\bigg) 
} {
  \sqrt{
    \SumPminusPbar
    \:
    \SumIminusIbar
  }
}
\right)
}

\newcommand{\bigUserSum}{
\frac{1}{N_U}\sum\limits_{}^{i} 
}
% Pearson loss commands with S set
\newcommand{\bigUserSumS}{
\frac{1}{N_{U_S}}\sum\limits_{}^{i\in S_U} 
}

\newcommand{\negativeBigUserSumS}{
\frac{-1}{N_{U_S}}\sum\limits_{}^{i\in S_U} 
}

\newcommand{\smallUserSumS}{
\frac{1}{N_{U_S}}\sum\limits_{}^{i\in S_U}
}

\newcommand{\SumIminusIbarS}{\sum\limits_{}^{j\in S_I} (Y_{ij} - \overline{Y_{iS_I}})^2}

\newcommand{\SumPminusPbarS}{\sum\limits_{}^{j\in S_I} (P_{ij} - \overline{P_{iS_I}})^2}

\newcommand{\SumPminusPbarIminusIbarS}
{
\sum\limits_{}^{j\in S_I} (P_{ij} - \overline{P_{iS_I}}) (Y_{ij} - \overline{Y_{iS_I}})
}

\newcommand{\perUserLossS}
{
\left(1 - \frac {
\SumPminusPbarIminusIbarS
}
{
  \sqrt{
    \SumPminusPbarS
    \:
    \SumIminusIbarS
  }
}
\right)
}

\newcommand{\gradientPearsonLossS}{
\left(
\frac {
(Y_{ij} - \overline{Y_{iS}}) 
-
 	\frac {\SumPminusPbarIminusIbarS}
    	  {\SumPminusPbarS}
 (P_{ij} - \overline{P_{iS}}) 
} {
  \sqrt{
    \SumPminusPbarS
    \:
    \SumIminusIbarS
  }
}
\right)
}

\begin{abstract}
Although modern recommendation systems can exploit the structure in users' item feedback, most are powerless in the face of new users who provide no structure for them to exploit. In this paper we introduce ImplicitCE, an algorithm for recommending items to new users during their sign-up flow. ImplicitCE works by transforming users' implicit feedback towards auxiliary domain items into an embedding in the target domain item embedding space. ImplicitCE learns these embedding spaces and transformation function in an end-to-end fashion and can co-embed users and items with any differentiable similarity function. 

To train ImplicitCE we explore methods for maximizing the correlations between model predictions and users' affinities and introduce Sample Correlation Update, a novel and extremely simple training strategy. Finally, we show that ImplicitCE trained with Sample Correlation Update outperforms a variety of state of the art algorithms and loss functions on both a large scale Twitter dataset and the DBLP dataset.
\end{abstract}

\section{Introduction}
In today's world of limitless entertainment, the competition for attention is fiercer than ever. When users open a site or app, they expect to see something that they like immediately. In response to this competition, researchers have developed powerful collaborative filtering algorithms that predict which new items users will like based on the structure in the user-item affinity graph.

Popular approaches have historically included neighborhood approaches which predict user affinity by explicitly grouping users and items \cite{neighborhood1} \cite{neighborhood2} and model based algorithms such as matrix factorization \cite{svdpp} \cite{implicit} \cite{pmf}. Recently, researchers have shown success with methods that exploit nonlinear user-item relationships such as autoencoders \cite{cdlcf} \cite{autoencodercf}, RBMs \cite{rbmcf} and supervised deep neural networks \cite{ncf}. Many of these algorithms frame recommendation as a ``reconstruction'' problem, where the objective is to ``fill in the gaps'' in incomplete user-item affinity information \cite{svdpp} \cite{implicit} \cite{pmf} \cite{autoencodercf}.. 

One of the largest draws of matrix factorization and certain deep collaborative filtering methods like \cite{ncf} is that these methods yield low dimensional user and item embeddings. In large multi-component systems these embeddings can be used as information dense inputs to other machine learning models. 
However, the user and item embeddings matrix factorization generates have another desirable property: they are dot product co-embeddings. That is, we can estimate user-item affinity with only an embedding dot product, instead of an expensive neural network evaluation.

For most collaborative filtering algorithms, it is difficult to generate embeddings or make recommendations for new users. One approach to this ``user cold start'' problem is to utilize users' actions in an auxiliary domain in order to inform recommendation in the target domain. In this paper we:

\begin{compactitem} \itemsep0em
\item Introduce ImplicitCE, an algorithm that transforms user's implicit feedback towards auxiliary domain items into a co-embedding in a target domain item embedding space, and illustrate how we can use ImplicitCE to recommend target domain items to new users.
\end{compactitem} \itemsep0em

\begin{compactitem} \itemsep0em
\item Demonstrate that directly maximizing the correlations between model predictions and each user's affinities can yield better performance on the auxiliary domain implicit feedback recommendation task than minimizing a mean square error or ranking loss. 

\end{compactitem} \itemsep0em
\begin{compactitem} \itemsep0em
\item Introduce Sample Correlation Update, a novel, efficient, and incredibly simple method for maximizing these correlations.
\end{compactitem} \itemsep0em

\begin{compactitem} \itemsep0em
\item Evaluate ImplicitCE and Sample Correlation Update on both a large scale Twitter dataset and the public DBLP citation dataset and show that they outperform baseline methods and loss functions on a variety of performance metrics.
\end{compactitem} \itemsep0em

\section{Related Work}

Many cross domain models rely on transfer learning at the latent factor or cluster level to join domains. Some like \cite{socialtensor} \cite{tagcdcf} \cite{transfergenerative} use user-provided cross domain tags such as genre to add model structure like additional matrix dimensions or factorization objective constraints. Others like the collective matrix factorization model in \cite{collectivematrixfactorization} work to exploit the structure that underlies user-item affinity matrices in multiple domains.

However, these approaches tend to provide little additional value in the true cold start situation, where the user has had no interactions with items in the target domain. The simplest strategy to handle this problem is to concatenate the user-item interaction profiles in each domain into a joint domain interaction profile and then perform traditional collaborative filtering techniques like in \cite{userprofilesize}. Another common strategy is to develop a mapping from user-item interactions in the source domain to interactions in the target domain. For example, if we use CCA to compute the correlation matrix $P$ and the canonical component matrices $W_x$ and $W_y$ , we can use these matrices to project the source domain user-item interaction matrix $X$ to an estimate of the target domain user-item interaction matrix $Y$ with $\widehat{Y} = XW_xPW_y^{T}$ \cite{ccacold}.
Recently, neural methods for learning this mapping have grown in popularity. In \cite{starspace}, the authors describe an algorithm for co-embedding entities based on positive entity pairs and demonstrate how it can be used to generate user-item recommendations from binary feedback data. In \cite{microsoftdeepcrossdomain} the authors describe a multi-view deep neural network model with feature hashing that embeds users and items from multiple domains in the same space. In \cite{embeddingtoembedding} the authors introduce a method for mapping between user embedding spaces.

To our knowledge, methods to directly maximize the sum of the Pearson correlations between users' predicted and demonstrated item affinities have not been previously studied. However, many neighborhood-based recommendation systems use Pearson correlation as an item affinity vector similarity metric \cite{neighborhoodcorrelation1} \cite{neighborhoodcorrelation2}. In addition, correlation has been used as a loss function in methods where the exact magnitude of a continuous output is less important than its relative value, such as Cascade Correlation Networks \cite{cascadecorr}.

\section{ImplicitCE and Sample Correlation Update}

We are considering the problem of recommending to a new user a set of items that belong to some target domain under the constraint that the user has interacted with items that belong to some auxiliary domain but has not interacted with any items in the target domain. For both domains, we use the strength of a user's interaction with an item as a measure of that user's ``affinity'' for that item. 

For example, we use the number of times that a user visits a news website as a measure of that user's affinity for that news website.

Note that this is a form of graded implicit feedback data, and that we consider the absence of interaction between a user and an item to be indicative of low affinity between that user and that item. Since we are using only auxiliary domain data to predict target domain affinity, this does not directly damage our model's performance on out-of-bag user-item pairs.

We propose ImplicitCE, an end-to-end framework for generating co-embeddings of users and target domain items. In this framework, a user's target domain embedding is a function of their auxiliary domain item affinities, and the predicted affinity between a user and a target domain item is determined by the similarity between their target domain embedding and that item's embedding.

\subsection{Building and Using ImplicitCE}
\begin{figure}
% \centering
\includegraphics[height=4.5in,  width=4in, keepaspectratio]{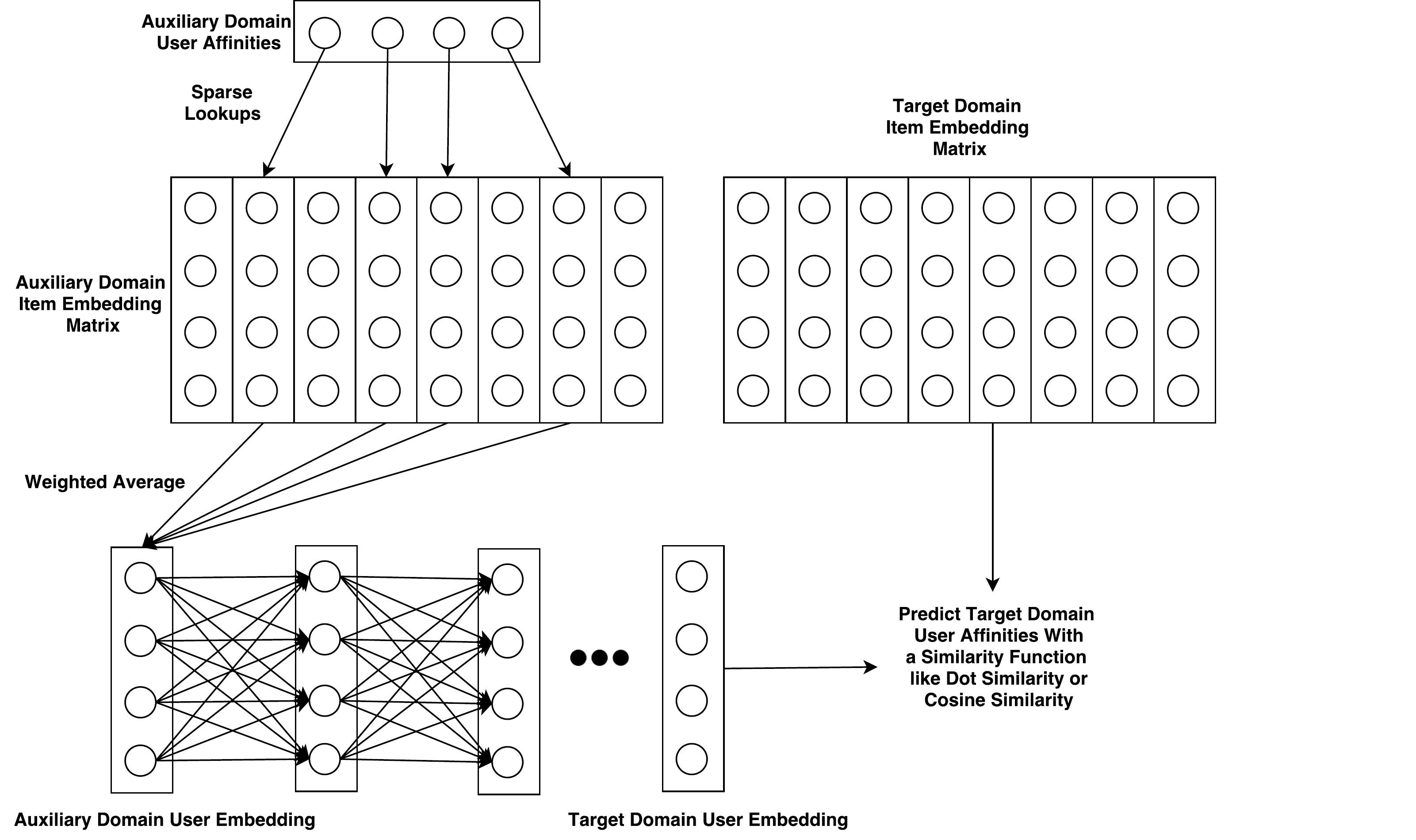}
\caption{ImplicitCE. A user's auxiliary domain embedding is the affinity-weighted average of the auxiliary domain item embeddings. A neural network maps this embedding to a target domain user embedding that we can compare to target domain items' embeddings with a similarity function.}
\end{figure}

ImplicitCE consists of three components that are learned simultaneously: the embedding map $e_A$ which assigns embeddings to each item $a_i$ in the set of auxiliary domain items $A$, the embedding map $e_B$ which assigns embeddings to each item $b_j$ in the set of target domain items $B$, and the transformation function $f(e_{U_A};\theta)$ which transforms user $u$'s auxiliary domain embedding $e_{U_A}$ into a target domain embedding $e_{U_B}$. 

ImplicitCE computes users' auxiliary domain embeddings with an affinity-weighted linear combination of auxiliary domain item embeddings. That is, if $e_{A_i}$ is the embedding of item $a_i$ and $k_{ai}$ is $u$'s affinity for $a_i$, then $u$'s auxiliary domain embedding $e_{U_A}$ is $k_{a1}*e_{A_1} + k_{a2}*e_{A_2} + ...$ and $u$'s target domain embedding $e_{U_B}$ is $f(e_{U_A})$. We can then assess the strength of a user's affinity for some item $b_j$ in $B$ as $sim(e_{U_B}, e_{B_j})$, where $sim(u, v)$ is a function such as dot product. If we are not planning on utilizing the embeddings with an approximate nearest neighbor system, we can also add per-user or per-item bias terms.

There are several significant benefits to this framework. First, ImplicitCE can immediately generate target domain recommendations for new users who were not present at model training time and have had no interactions with any items in the target domain. Furthermore, ImplicitCE does not require any content information about the items in the auxiliary or target domains. Moreover, ImplicitCE generates user embeddings in the target domain rather than directly predicting affinity. This is more efficient than a method that requires a neural network evaluation for each user-item pair like \cite{ncf}. In addition, since ImplicitCE can construct these user-target co-embeddings based on any differentiable embedding similarity function,
including metrics like cosine or euclidian similarity whose positive complements are true distance metrics, we can use ImplicitCE co-embeddings with approximate nearest neighbor algorithms like LSH to match items to users extremely efficiently. 

Furthermore, since ImplicitCE learns the auxiliary and target embedding spaces along with the function to transform between them, it can construct the embeddings to exploit the joint distribution of $P(a_1, a_2...,b_1, b_2,...)$ rather than just the marginal distributions $P(a_1, a_2, ...)$ and $P(b_1, b_2, ...)$. To demonstrate this difference let's consider an extreme example. Say there are two items $a_i, a_j$ in $A$ such that users' affinity for $a_i$ is highly correlated with their affinity for $a_j$. If we use a latent factor model like SVD, then affinities for these two items are likely to be collapsed into a single dimension, and a user's relative affinities for $a_i$ and $a_j$ will have a much less significant impact on $e_{U_A}$ than the average of that user's affinities for $a_i$ and $a_j$. However, if it is the case that the difference between the degrees of a user's interaction with $a_i$ and $a_j$ is the most important signal for predicting a user's interaction with items in $B$, this will be difficult for a model that is trained on the SVD latent factor representations to learn. 
\subsection{Training ImplicitCE} \label{section:ImplicitCESection}
The objective of ImplicitCE is to generate target domain user embeddings such that $\simUBij$ is correlated with the affinity between user $u_i$ and item $b_j$. A standard way to do this is to use a variant of the technique from \cite{pmf} and model the conditional distribution over the user-item interaction counts with $\mathcal{N}\big(Y_{ij} | \simUBijBig, \sigma^2\big)$ where $Y_{ij}$ is the number of interactions between user $u_i$ and item $b_j$ and $\mathcal{N}(x | \mu, \sigma^2)$ is the probability density function of the Gaussian distribution with mean $\mu$ and variance $\sigma^2$. Then the task of maximizing the likelihood of data over this distribution is equivalent to minimizing the square error loss: $\sum\limits_{}^i \sum\limits_{}^j \big(\simUBijBig - Y_{ij}\big)^{2}$.

However, the assumptions of the above model don't generally hold, since $Var(Y_{ij})$ is not constant for all $i$. Users with more target domain interactions can dominate the loss. Moreover, our goal is for the user-item embedding similarities to be correlated with user-item affinity. It is unimportant whether their magnitudes are close to the exact numbers of interactions.

An alternative approach is to frame the problem as a personalized ranking problem and aim to make the ranking of items that the model produces for each user be as close as possible to the actual rank of the items by user interaction. A popular way to do this is to use a pairwise ranking objective that casts the ranking problem as a classification problem. At each step, we sample a user $u_i$ and a pair of items $b_{j_1}, b_{j_2}$ such that $u_i$ has a greater affinity for $b_{j_2}$ than $b_{j_1}$. The loss is some function of $sim\big(e_{U_{i_B}}, e_{B_{j_1}}\big)$ and $sim\big(e_{U_{i_B}}, e_{B_{j_2}}\big)$. For example, in BPR the loss is: $\sum\limits_{}^{i, b_{j_1}, b_{j_2}} -\ln S \big(sim\big(e_{U_{i_B}}, e_{B_{j_2}}\big) - sim\big(e_{U_{i_B}}, e_{B_{j_1}}\big)\big)$ where $S$ is the sigmoid function.

One aspect of ranking objectives is that they do not attempt to capture the shape of a user's affinity function. Consider a user who has several distinct item affinity groups, such that within each group the user likes all items about the same. Then any ranking that correctly groups the items will be mostly true to that user's preferences. However, it is possible for the ranking loss to provide the same or an even greater penalty for improperly ordering items within groups than across groups. That is, it is possible for the predicted affinity to be highly correlated with the number of interactions and for the ranking loss to be large, and it is possible for the predicted affinity to be largely uncorrelated with the number of interactions but for the ranking loss to be small (See Appendix B for an example).

\subsubsection{User-Normalized MSE and Per-User Correlation Loss}

We can avoid the problems of both of the above approaches by adopting a modified version of the mean square error loss. Lets consider some user $u_i$, the vector $Y_{i}$ of $u_i$'s target domain interactions, and the vector $P_{i}$ of the model's predictions of $u_i$'s target domain interactions. That is, $P_{ij} = sim\big(e_{U_{i_B}}, e_{B_j}\big)$. Then the portion of the mean square error loss that $u_i$ contributes is $\Li = \frac{1}{N_I} \sumPMinusISquare$.

The size of $\Li$ is influenced by $\|Y_i\|$, but we can mitigate this issue by pre-normalizing $Y_i$ to form $\widehat{Y}_{i} = \frac{Y_{i} - \overline{Y_{i}}}{\|Y_{i}\|}$ and using $\widehat{Y}_{i}$ to compute the User-Normalized MSE loss $L_{i_N} = \frac{1}{N_I} \sumPMinusISquareNormalized$.

However, there is still a significant issue with this loss: although $\|Y_i\|$ does not affect the magnitude of $L_{i_N}$, $\|P_i\|$ does, so $L_{i_N}$ is very sensitive to outliers, especially ones that make the value of $\|P_i\|$ large.
Note that in a sparse matrix factorization setting each outlier user $u_o$ will not dramatically impact the optimization, since $\frac{\partial L_{{o}_N}}{\partial \theta}$ is only nonzero for $u_o$'s embedding vector and the embedding vectors of the items that $u_o$ interacted with. However, in a model like ImplicitCE each outlier user has a larger impact, since $\frac{\partial L_{{o}_N}}{\partial \theta}$ is potentially nonzero for all of weights of the $f(e_{U_A}; \theta)$ model as well as the embeddings of all the auxiliary and target domain items that $u_o$ interacted with.

Since we don't care about the magnitudes of the elements in $P_i$ and are only interested in their relative values, we can address this issue by normalizing $P_i$ as well to form $\widehat{P}_{i} = \frac{P_{i} - \overline{P_{i}}}{\|P_{i}\|}$. Then, our new per-user loss becomes:
$\widehat{L_{i}} = \frac{1}{N_I} \sum\limits_{}^j \big(\widehat{P}_{ij} - \widehat{Y}_{ij}\big)^{2} = 
\mathlarger{2}
\left(
\mathlarger{1} - \frac {
\mathlarger{\sum}\limits_{}^j \bigg(\widehat{P}_{ij} - \overline{\widehat{P}_{i}}\bigg) \bigg(\widehat{Y}_{ij} - \overline{\widehat{Y}_{i}}\bigg)
}{
\sqrt{\SumPhatminusPbar \SumIhatminusIbar}
}
\right)$
% \end{align}
Note that this is equivalent to $2\left(
1 - corr(P_i, I_i)
\right)$, where $corr$ is the Pearson correlation coefficient estimator. By removing the constant term and averaging over all $N_U$ users, we form the Per-User Correlation Loss: $\Lcorr = 
\frac{1}{N_U}\sum\limits_{}^{i} \left(1 - corr(P_i, I_i)\right)$. By using $\Lcorr$ as the loss function we directly maximize the correlations between our model's predictions and each user's actual numbers of interactions.

\subsubsection{Convergence Rate Experiment} \label{PerUserCorrExperimentSection}

\begin{figure*}
\centering
\includegraphics[height=3in,  width=5in, keepaspectratio]{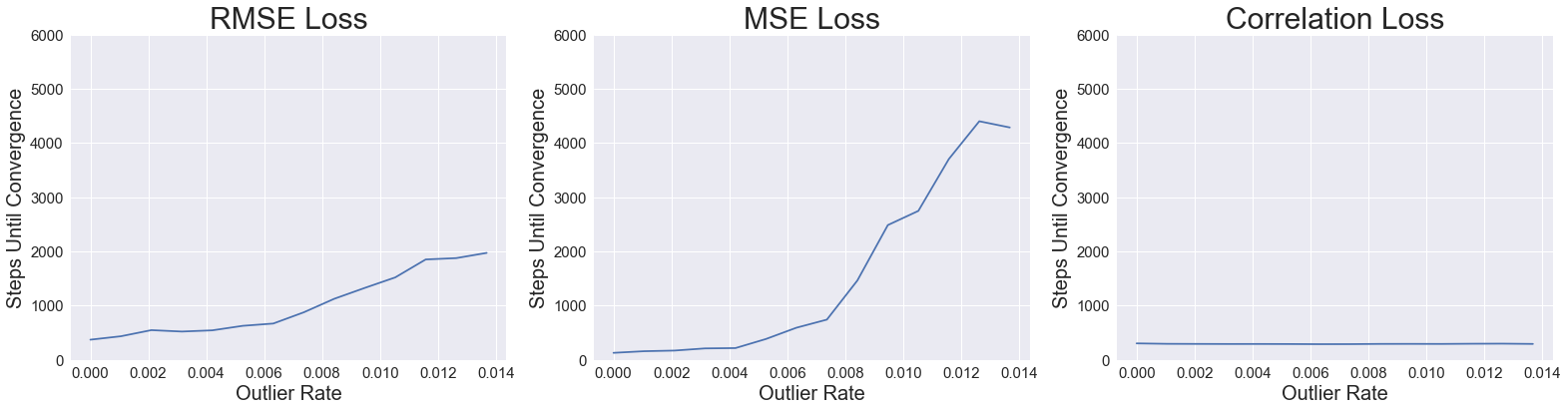}
\caption{Learning a linear function in the presence of outlier users. User-Normalized MSE and RMSE take longer to converge as the outlier users increase, but Per-User Correlation Loss does not.}
\label{fig:flat_steps_until_convergence}
\end{figure*}

To demonstrate the advantage that the Per-User Correlation Loss has over User-Normalized MSE, we perform a small experiment with simulated data. In order to illustrate that User-Normalized MSE's sensitivity to outliers is not simply an artifact of the squared term, we also include results over the User-Normalized RMSE loss: $\frac{1}{N_I} \sumSQRTPMinusISquareNormalized$

In this experiment, we use gradient descent to train an ordinary linear regression model to learn a mapping between simulated auxiliary and target domain item interaction data. We model users' auxiliary domain item interactions with a Multivariate Gaussian and we assign users' target domain item interactions ($Y_i$) to be a fixed linear function of their auxiliary domain item interactions.

For each loss function and outlier user rate $p$ we repeat the following process until convergence. \footnote{We define convergence as the loss function dipping below $10$ for User-Normalized RMSE, under $50$ for User-Normalized MSE and under $0.01$ for Per-User Correlation Loss (i.e. correlation > $0.99$).}
\begin{compactitem} \itemsep0em
\item Generate a "user" by drawing an auxiliary domain interaction vector from our Gaussian and computing the associated target domain interaction vector $Y_i$ with our fixed linear function.
\end{compactitem} \itemsep0em
\begin{compactitem} \itemsep0em
\item Generate a prediction $P_i$ for this "user" with our linear regression model and take a gradient descent step over all items $j$ towards minimizing the loss. 
\end{compactitem} \itemsep0em
\begin{compactitem} \itemsep0em
\item With probability $p$, repeat the above two steps with an "outlier user" that has a large number of auxiliary domain item interactions (and therefore a large $\|P_i\|$ since our model is linear) and random $Y_i$.
\end{compactitem} \itemsep0em

We find that as we increase the outlier user rate the User-Normalized MSE/RMSE models take longer to converge while the Per-User Correlation Loss's convergence rate remains unchanged.
(Figure ~\ref{fig:flat_steps_until_convergence}).

\subsubsection{Sample Correlation Update}

However, there is a serious problem with the Per-User Correlation Loss function that makes it infeasible to use with SGD over batches of (user, item) pairs when $N_I$ is large. In order to compute $\frac{\partial \Licorr}{\partial Pij}$ for a (user, item) pair $i,j$ we need to compute 
$\|P_{i}\| = \frac{1}{N_I}\sum\limits_{}^j P_{ij}^{2}$, which requires a sum with $N_I$ terms. To address this issue we apply the following simple algorithm, which we call Sample Correlation Update, or SCU:

\begin{compactenum} \itemsep0em 
\item Uniformly sample a small set of users $S_U$ with size $N_{S_U}$ and a small set of items $S_I$ with size $N_{S_I}$.

\item Compute $Pij$ for $i\in S_U,j\in S_I$, and the means $\overline{P_{iS_I}},\overline{Y_{iS_I}}$ over $i \in S_I$ to compute the following loss function: $\LcorrS = \bigUserSumS \perUserLossS$

\item Use the gradient of this loss $\nabla \LcorrS = \negativeBigUserSumS  \gradientPearsonLossS$, which only requires sums over $j\in S_I,i \in S_U$, to perform an update step.
\end{compactenum}

Since the error of the sample approximation of correlation and its gradient decrease exponentially as the size of the sample increases \cite{fisher1921}, we would expect that $L^{c_S}$ and $\nabla\LcorrS$ would quickly converge to $L^c$ and $\nabla\Lcorr$ as we increase $N_{I_S}$. This is exactly what we observe in Figure ~\ref{fig:correlation_prediction_error}: both $\|L^c - L^{c_S}\|$ and $\| \nabla\Lcorr - \nabla\LcorrS\|$ decrease exponentially as $N_{I_S}$ increases. We can now prove the following theorum about SCU:

\begin{figure*}
\centering
\includegraphics[height=3in,  width=5in, keepaspectratio]{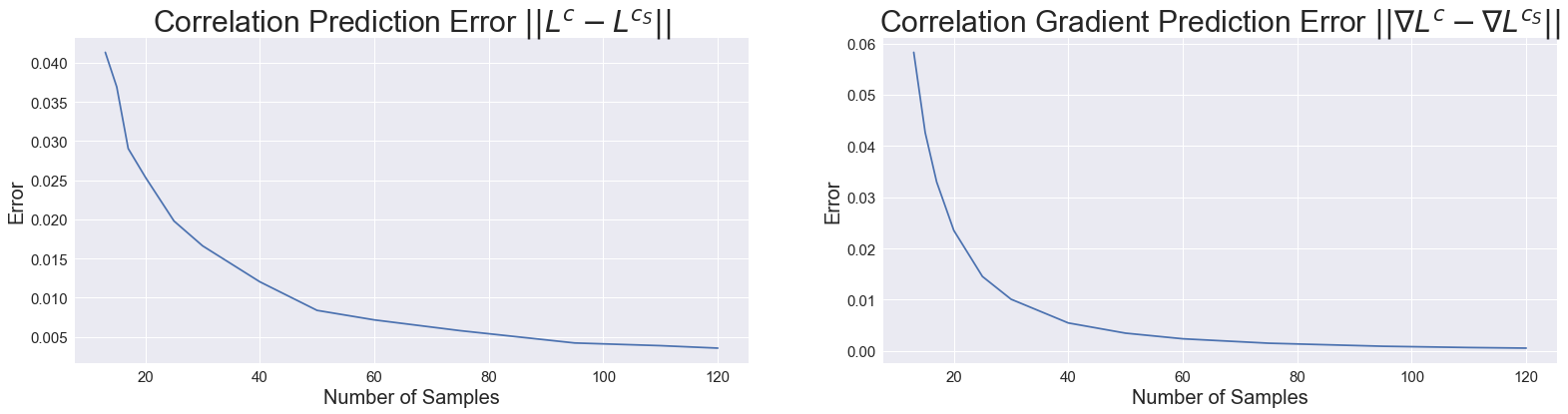}
\caption{If we generate random affinity vectors and predictions from a uniform distribution, we find that the square errors of both the sample approximation of correlation and its gradient decrease exponentially as the number of item samples increases.}
\label{fig:correlation_prediction_error}
\end{figure*}

\begin{lemma}
$\mathbb{E}_{S_U}\mathbb{E}_{S_I}[\nabla_j\LcorrS] = \nabla_j\Lcorr + \mathcal{O}(1/N_{I_S})$:
\end{lemma}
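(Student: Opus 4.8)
The plan is to establish that SCU produces an \emph{almost unbiased} stochastic estimate of the true correlation gradient, with the bias coming entirely from using finite‑sample plug‑ins for a few population moments. The starting observation is that, for a fixed user $i$, the per‑user gradient $\partial\Licorr/\partial P_{ij}$ is a \emph{fixed} smooth function $\Phi$ of $P_{ij}$, $Y_{ij}$ and the five empirical moments $\overline{P_i}$, $\overline{Y_i}$, $\tfrac1{N_I}\sum_k(P_{ik}-\overline{P_i})^2$, $\tfrac1{N_I}\sum_k(Y_{ik}-\overline{Y_i})^2$, $\tfrac1{N_I}\sum_k(P_{ik}-\overline{P_i})(Y_{ik}-\overline{Y_i})$ --- this is exactly the bracketed expression in $\nabla\LcorrS$ displayed above, read over the full index set. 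Likewise $\partial\LicorrS/\partial P_{ij}$ is $\Phi$ evaluated at the \emph{sample} moments formed over the drawn set $S_I$ (which must contain $j$ for SCU to touch item $j$, so the item‑expectation is taken conditionally on $j\in S_I$, and the inclusion probability $\approx N_{I_S}/N_I$ is what keeps the per‑item update on the same scale as $\nabla_j\Lcorr$). The lemma thus reduces to bounding $\big|\mathbb{E}_{S_I}[\Phi(\widehat s)]-\Phi(s)\big|$ by $\mathcal O(1/N_{I_S})$, uniformly in $i,j$, where $s,\widehat s$ denote the true and sampled moment vectors.

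First I would dispose of the user sampling: $\LcorrS$ is a \emph{normalized} average $\tfrac1{N_{U_S}}\sum_{i\in S_U}(\cdot)$ of per‑user terms, so $\mathbb{E}_{S_U}$ turns it into the population average $\tfrac1{N_U}\sum_i(\cdot)$ \emph{exactly}, whether $S_U$ is drawn with or without replacement; hence it suffices to control the per‑user item‑sampling bias and then average, and if $\nabla_j$ is read as a parameter derivative the deterministic chain‑rule factor $\partial P_{ij}/\partial\theta$ pulls out of every expectation and the same reduction applies. Next I would record the behaviour of the sample moments: conditioning on $j\in S_I$, the other $N_{I_S}-1$ items are a uniform sample of the rest, so the sample means differ from $\overline{P_i},\overline{Y_i}$ in expectation by $\mathcal O(1/N_{I_S})$ (the forced element $j$ being the only source of the shift), the sample second moments are unbiased up to the standard $\mathcal O(1/N_{I_S})$ normalization/finite‑population term, and each of these estimators has variance $\mathcal O(1/N_{I_S})$. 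Assuming the interaction counts and the outputs $P_{ij}=\simUBij$ lie in a bounded range --- harmless, since counts are bounded and the similarity scores are bounded on the relevant domain --- all higher centered moments of $\widehat s$ are likewise $\mathcal O(1/N_{I_S})$, and the denominators $\sum_k(P_{ik}-\overline{P_i})^2$ and $\sqrt{\sum_k(P_{ik}-\overline{P_i})^2\sum_k(Y_{ik}-\overline{Y_i})^2}$ appearing in $\Phi$ are bounded away from $0$ except on an event of exponentially small probability (cf.\ \cite{fisher1921}) where a sample variance degenerates. Off that bad event $\Phi$ is $C^2$ with first and second derivatives bounded on a fixed neighbourhood of $s$.

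The main step is then a second‑order Taylor expansion of $\Phi$ about $s$. With $\delta=\widehat s-s$,
\begin{align*}
\mathbb{E}_{S_I}[\Phi(\widehat s)] &= \Phi(s)+\nabla\Phi(s)^{\top}\mathbb{E}[\delta] \\
&\quad +\tfrac12\,\mathbb{E}\big[\delta^{\top}\nabla^2\Phi(\xi)\,\delta\big]
\end{align*}
for some intermediate $\xi$. The linear term is $\mathcal O(1/N_{I_S})$ because every coordinate of $\mathbb{E}[\delta]$ is; the quadratic term is $\mathcal O(1/N_{I_S})$ because $\nabla^2\Phi$ is bounded on the good event and $\mathbb{E}[\delta\delta^{\top}]$ collects variances and covariances of the sample moments, each $\mathcal O(1/N_{I_S})$; and the contribution of the exceptional event is exponentially small, hence absorbed into the error. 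Plugging back $\Phi(s)=\partial\Licorr/\partial P_{ij}$, averaging over $i$ (and contracting with $\partial P_{ij}/\partial\theta$ if $\nabla_j$ is a parameter gradient), and using the exact unbiasedness of the $S_U$‑average gives $\mathbb{E}_{S_U}\mathbb{E}_{S_I}[\nabla_j\LcorrS]=\nabla_j\Lcorr+\mathcal O(1/N_{I_S})$.

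The delicate points --- and where I would expect to spend most of the effort --- are (i) the denominator bookkeeping: since $\Phi$ is a ratio, its smoothness and the boundedness of $\nabla^2\Phi$ hold only where the sample variances stay away from zero, so the ``discard the bad event'' step must be made rigorous, cleanest via the boundedness assumption above (or else via an explicit lower‑tail bound on the sample variance together with a crude worst‑case bound on $\Phi$ there); and (ii) the normalization accounting around the free item index $j$, i.e.\ making sure that the $\approx N_{I_S}/N_I$ inclusion probability, the $\tfrac1{N_{U_S}}$ and $\tfrac1{N_I}$ prefactors, and the scale of the single‑coordinate derivative are tracked consistently so the comparison in the lemma is between quantities on the same scale. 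Everything else --- unbiasedness of the user average, the $\mathcal O(1/N_{I_S})$ bias and variance of empirical moments, and the Taylor remainder bound --- is routine once that regularity and bookkeeping are in place.
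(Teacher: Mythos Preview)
Your argument is sound, but it is a genuinely different route from the paper's. The paper does not Taylor-expand the gradient as a function of sample moments; instead it first commutes $\nabla_j$ with both expectations (legitimate because each is a finite sum), so that it only needs to control $\mathbb{E}_{S_I}[\LicorrS]$ rather than $\mathbb{E}_{S_I}[\nabla_j\LicorrS]$. It then invokes Fisher's classical bias expansion for the sample correlation coefficient, $\mathbb{E}_{S_I}[corr_{S_I}(P_i,I_i)] = corr(P_i,I_i) - \tfrac{corr - corr^3}{2N_{I_S}} + \mathcal{O}(1/N_{I_S}^2)$, to get $\mathbb{E}_{S_I}[\LicorrS] = \Licorr + \mathcal{O}(1/N_{I_S})$, and finally differentiates, treating $\nabla_j$ of the $\mathcal{O}(1/N_{I_S})$ remainder as still $\mathcal{O}(1/N_{I_S})$. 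The user-sampling step is handled exactly as you do it. So the paper's proof is a two-line citation argument, whereas yours is a self-contained delta-method computation on the gradient directly. Your version is more work but also more honest about the points the paper glosses over: the paper never addresses the denominator degeneracy you flag in (i), and its final step silently assumes that differentiating the Fisher bias term does not spoil the $\mathcal{O}(1/N_{I_S})$ rate, which is precisely the kind of regularity your bounded-$\nabla^2\Phi$ argument supplies. Conversely, the paper's route avoids your conditioning-on-$j\in S_I$ bookkeeping entirely, since it works at the level of the loss rather than a fixed coordinate of its gradient.
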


\begin{proof}
First, write $\mathbb{E}_{S_U}\mathbb{E}_{S_I}[\nabla_j\LcorrS$] as $\mathbb{E}_{S_U}\mathbb{E}_{S_I}\left[\nabla_j \smallUserSumS (1 - corr_{S_I}(P_i, I_i))\right]$. Since we can express $\mathbb{E}_{S_U}$ and $\mathbb{E}_{S_I}$ as sums, this is equivalent to $\nabla_j \mathbb{E}_{S_U}\left[\smallUserSumS \left(1 - \mathbb{E}_{S_I}\left[corr_{S_I}(P_i, I_i)\right]\right)\right]$

Now, let's note that sample correlation $corr_{S_I}(P_i, I_i)$ is not an unbiased estimator of population correlation $corr(P_i, I_i)$, but by \cite{fisher1915} we can write $\mathbb{E}_{S_I} [corr_{S_I}(P_i, I_i)]$ as:
\begin{align*}
corr(P_i, I_i) {-} \frac{corr(P_i, I_i) {-} corr(P_i, I_i)^3}{2N_{I_S}} {+} \mathcal{O}(1/N_{I_S}^2) &= \\
corr(P_i, I_i) {-} \mathcal{O}(1/N_{I_S}) 
\end{align*}  

This implies $\mathbb{E}_{S_U}\mathbb{E}_{S_I}[\nabla_j\LcorrS]$ is equivalent to:
\begin{align*}
\nabla_j \mathbb{E}_{S_U}\left[\smallUserSumS \Licorr + \mathcal{O}(1/N_{I_S}) \right] &= 
\\
\smallUserSumS \mathbb{E}_{S_U}\left[\nabla_j\Licorr\right] + \mathcal{O}(1/N_{I_S})
\end{align*}

Since $S_U$ is formed by uniformly sampling users, $\mathbb{E}_{S_U}\left[\nabla_j\Licorr\right] = \nabla_j\Lcorr$ and we can write:
% \begin{align}
$\mathbb{E}_{S_U}\mathbb{E}_{S_I}\nabla_j\LcorrS = \smallUserSumS \nabla_j\Lcorr+ \mathcal{O}(1/N_{I_S}) = \nabla_j\Lcorr+ \mathcal{O}(1/N_{I_S})$
% \end{align}
\end{proof}

\section{Twitter and DBLP Experiments}

In order to evaluate ImplicitCE and SCU, we first compare their performance at generating recommendations for new users during their sign-up flow. In this experiment we use a large scale real world Twitter dataset. Although many collaborative filtering algorithms can utilize auxiliary domain information, relatively few are compatible with this problem.

In order to be suitable, an algorithm must be able to immediately predict target domain affinities for a new user who is not present at model-fitting time and has no interactions in the target domain. We've selected several of the most popular models that meet this criteria as baselines. To maintain consistency, we use $300$ element embeddings for each model (a common size for model comparisons \cite{word2vec}).

In order to further demonstrate the effectiveness of ImplicitCE and provide a fairer comparison we also evaluate the models on the open DBLP citation network dataset (aminer.org/citation).

\subsection{Models}

\begin{compactitem} \itemsep0em
\item \textit{Matrix Factorization with Fold-In}:
We construct a user-item affinity matrix where each row represents a training set user and each column represents an item from the auxiliary or target domains. We then factorize this matrix with either SVD, the ALS approach suggested in \cite{implicit}, or Collective Matrix Factorization \cite{collectivematrixfactorization} with fully shared latent factors and hyperoptimized matrix weights. In order to generate predictions for new users we apply the folding in techniques described in \cite{svdfoldin} and \cite{alsfoldin} to their auxiliary domain user-item affinity vectors. % 
\end{compactitem} %\itemsep0em

\begin{compactitem} \itemsep0em
\item \textit{CD-CCA}: 
We apply the techniques from \cite{ccacold} to generate low dimensional representations of users' auxiliary and target domain user-item affinity matrices, compute a mapping between them with Canonical Correlation Analysis, and reconstruct the target domain user-item affinity matrix from the low dimensional prediction.
\end{compactitem} %\itemsep0em

\begin{compactitem} \itemsep0em
\item \textit{EMCDR}:
First, we generate low dimensional representations of the user's auxiliary and target domain user-item affinity matrices. Next, we use a neural network to learn a mapping from the auxiliary domain user embeddings to the target domain user embeddings. Finally, we use the similarities between the target domain user and item embeddings to approximate the target domain user-item affinities \cite{embeddingtoembedding}.
\end{compactitem} %\itemsep0em

\begin{compactitem} \itemsep0em
\item \textit{ImplicitCE}:
We train ImplicitCE with SCU as well as mini-batch gradient descent with the User-Normalized MSE Loss and the Bayesian Personalized Ranking loss (see Section \ref{section:ImplicitCESection}). We train these models with each of three embedding similarity functions $sim$: Dot Similarity $uv$, Cosine Similarity $\frac{uv}{\|u\|\|v\|}$ and Euclidian Similarity $1 - \|u - v\|$.
\end{compactitem} %\itemsep0em

For each model we ran a random search over the model hyperparameters to find the configuration that performs best on a validation set. We then evaluate that model on a holdout set. In order to compare models, we computed the averages and 95\% confidence intervals of the NDCG, ERR, Recall $@$ $10$ and Pearson Correlation metrics over all users in the holdout set.

\subsection{Twitter Experiment}
\begin{table*}[t]
\caption{Model performance on the Twitter and DBLP datasets by 95\% confidence intervals over the holdout set. The hyperparameters for all models were chosen by random search over a validation set.} 
\begin{subtable}[t]{1\textwidth}
\centering

\begin{tabular}{@{\extracolsep{4pt}}llccccccc}
\textbf{Twitter} & & Correlation & NDCG & ERR & 
Recall $@$ 10 \\ 
\toprule
\multirow{4}{*}{ImplicitCE}  
& Sample Corr Update
& \textbf{0.308 $\pm$ 0.002} % pearsonr
& \textbf{0.533 $\pm$ 0.002} % ndcg
& \textbf{0.306 $\pm$ 0.002} % err
& \textbf{0.891 $\pm$ 0.005} % hit
\tabularnewline
& MSE Loss
& 0.246 $\pm$ 0.002 % pearson
& 0.434 $\pm$ 0.003 % ndcg
& 0.246 $\pm$ 0.002 % err
& 0.746 $\pm$ 0.006 % hit
\tabularnewline
& BPR Loss
& 0.096 $\pm$ 0.001 % pearson
& 0.335 $\pm$ 0.002 % ndcg
& 0.221 $\pm$ 0.002 % err
& 0.668 $\pm$ 0.005 % hit
\tabularnewline
\midrule
\multirow{4}{*}{Baselines} 
& SVD
& 0.121 $\pm$ 0.009 % pearson
& 0.301 $\pm$ 0.014 % ndcg
& 0.137 $\pm$ 0.012 % err
& 0.521 $\pm$ 0.018 % hit 
\tabularnewline
& Implicit ALS 
& 0.145 $\pm$ 0.001 % pearson
& 0.290 $\pm$ 0.007 % ndcg
& 0.151 $\pm$ 0.002 % err
& 0.571 $\pm$ 0.028 % hit 
\tabularnewline
& Collective MF
& 0.111 $\pm$ 0.006 % pearson
& 0.280 $\pm$ 0.002 % ndcg
& 0.140 $\pm$ 0.002 % err
& 0.498 $\pm$ 0.011 % hit 
\tabularnewline
& CD-CCA
& 0.189 $\pm$ 0.002 % pearson
& 0.312 $\pm$ 0.011 % ndcg
& 0.212 $\pm$ 0.006 % err
& 0.511 $\pm$ 0.014 % hit 
\tabularnewline
& EMCDR
& 0.197 $\pm$ 0.009 % pearson
& 0.306 $\pm$ 0.027 % ndcg
& 0.171 $\pm$ 0.006 % err
& 0.745 $\pm$ 0.006 % hit 
\tabularnewline
\bottomrule

\end{tabular}
\end{subtable}
\bigskip
\begin{subtable}[t]{1\textwidth}
\centering
\begin{tabular}{@{\extracolsep{4pt}}llccccccc}
\textbf{DBLP}  \\ 
\toprule
\multirow{4}{*}{ImplicitCE}  
& Sample Corr Update
& \textbf{0.577 $\pm$ 0.007} % pearsonr
& \textbf{0.617 $\pm$ 0.005} % ndcg
& \textbf{0.592 $\pm$ 0.007} % err
& \textbf{0.997 $\pm$ 0.002} % hit
\tabularnewline
& MSE Loss
& 0.501 $\pm$ 0.012 % pearson
& 0.444 $\pm$ 0.007 % ndcg
& 0.532 $\pm$ 0.003 % err
& 0.956 $\pm$ 0.003 % hit
\tabularnewline
& BPR Loss
& 0.401 $\pm$ 0.006 % pearson
& 0.471 $\pm$ 0.007 % ndcg
& 0.558 $\pm$ 0.011 % err
& 0.965 $\pm$ 0.009 % hit
\tabularnewline
\midrule
\multirow{4}{*}{Baselines} 
& SVD
& 0.250 $\pm$ 0.012 % pearson
& 0.491 $\pm$ 0.009 % ndcg
& 0.581 $\pm$ 0.005 % err
& 0.908 $\pm$ 0.011 % hit 
\tabularnewline
& Implicit ALS 
& 0.235 $\pm$ 0.001 % pearson
& 0.466 $\pm$ 0.007 % ndcg
& 0.532 $\pm$ 0.002 % err
& 0.923 $\pm$ 0.028 % hit 
\tabularnewline
& Collective MF
& 0.230 $\pm$ 0.002 % pearson
& 0.452 $\pm$ 0.002 % ndcg
& 0.555 $\pm$ 0.001 % err
& 0.900 $\pm$ 0.009 % hit 
\tabularnewline
& CD-CCA
& 0.312 $\pm$ 0.009 % pearson
& 0.502 $\pm$ 0.015 % ndcg
& 0.537 $\pm$ 0.003 % err
& 0.891 $\pm$ 0.003 % hit 
\tabularnewline
& EMCDR
& 0.301 $\pm$ 0.011 % pearson
& 0.494 $\pm$ 0.012 % ndcg
& 0.550 $\pm$ 0.008 % err
& 0.938 $\pm$ 0.004 % hit 
\tabularnewline
\bottomrule

\end{tabular}
\end{subtable}
\label{table:CrossModelTable}
\end{table*}
\begin{table*}
\vspace{-5mm}
\centering
\caption{ImplicitCE performance on the Twitter dataset over model hyperparameters by 95\% confidence intervals over the holdout set. \label{table:CrossLayerTable}} 
\begin{tabular}{@{\extracolsep{4pt}}llccccccc}
& Correlation & NDCG & ERR & Recall $@$ 10 \\ 
\midrule

Linear $f(e_{U_A};\theta)$ 
& 0.259 $\pm$ 0.001 % pearson
& 0.489 $\pm$ 0.002 % ndcg
& 0.302 $\pm$ 0.002 % err
& 0.881 $\pm$ 0.003 % hit 
\\
One Layer $f(e_{U_A};\theta)$
& 0.278 $\pm$ 0.003 % pearson
& 0.506 $\pm$ 0.004 % ndcg
& 0.306 $\pm$ 0.002 % err
& 0.887 $\pm$ 0.021 % hit 
\\
Two Layer $f(e_{U_A};\theta)$
& 0.308 $\pm$ 0.002 % pearsonr
& \textbf{0.533 $\pm$ 0.002} % ndcg
& 0.306 $\pm$ 0.004 % err
& \textbf{0.891 $\pm$ 0.005} % hit
\\
Three Layer $f(e_{U_A};\theta)$
& \textbf{0.318 $\pm$ 0.001} % pearson
& 0.529 $\pm$ 0.002 % ndcg
& \textbf{0.307 $\pm$ 0.002} % err
& 0.890 $\pm$ 0.003 % hit 
\\
\midrule
Cosine Similarity
& \textbf{0.308 $\pm$ 0.0018} % pearsonr
& \textbf{0.533 $\pm$ 0.0024} % ndcg
& \textbf{0.306 $\pm$ 0.0022} % err
& \textbf{0.891 $\pm$ 0.0046} % hit
\\
Dot Similarity
& 0.231 $\pm$ 0.0020 % pearson
& 0.396 $\pm$ 0.0016 % ndcg
& 0.234 $\pm$ 0.0024 % err
& 0.722 $\pm$ 0.0064 % hit 
\\
Euclidian Similarity
& 0.228 $\pm$ 0.0018 % pearson
& 0.434 $\pm$ 0.0026 % ndcg
& 0.256 $\pm$ 0.0024 % err
& 0.769 $\pm$ 0.0060 % hit 
\\
\bottomrule
\end{tabular}

\end{table*}
\begin{table*}[t]
\centering
\caption{Validation ROC-AUC for each of the topic prediction tasks by 95\% confidence intervals over the cross-validation folds. } 
\begin{tabular}{@{\extracolsep{4pt}}llccccccc}
& Sports & Music & Entertainment & Government \& Politics & News \\ 
\midrule

SVD 
& 0.730 $\pm$ 0.016 % Sports
& 0.568 $\pm$ 0.020 % Music
& 0.624 $\pm$ 0.018 % Entertainment
& 0.618 $\pm$ 0.012 % Government Politics
& 0.623 $\pm$ 0.020 % News
% & 0.641 $\pm$ 0.018 % Arts Culture
\\
ALS 
& 0.739 $\pm$ 0.022 % Sports
& 0.589 $\pm$ 0.008 % Music
& 0.626 $\pm$ 0.008 % Entertainment
& 0.650 $\pm$ 0.008 % Government Politics
& 0.622 $\pm$ 0.014 % News
% & 0.651 $\pm$ 0.012 % Arts Culture
\\
Autoencoder
& 0.602 $\pm$ 0.026 % Sports
& 0.575 $\pm$ 0.024 % Music
& \textbf{0.675 $\pm$ 0.032} % Entertainment
& 0.598 $\pm$ 0.050 % Government Politics
& 0.639 $\pm$ 0.052 % News
\\
EMCDR
& 0.651 $\pm$ 0.095 % Sports
& 0.550 $\pm$ 0.044 % Music
& 0.511 $\pm$ 0.012 % Entertainment
& 0.502 $\pm$ 0.090 % Government Politics
& 0.601 $\pm$ 0.083 % News
\\
ImplicitCE
& \textbf{0.781 $\pm$ 0.012} % Sports
& \textbf{0.696 $\pm$ 0.012} % Music
& 0.671 $\pm$ 0.012 % Entertainment
& \textbf{0.735 $\pm$ 0.012} % Government Politics
& \textbf{0.726 $\pm$ 0.012} % News
\\
\bottomrule
\end{tabular}
\label{table:TTTTable}
\end{table*}

On Twitter, users with large and active sets of followers are known as ``producers.'' Producers generate new content on a regular basis, and for each user it is important to recommend producers with whom they are likely to heavily interact (via likes, retweets, etc). When a new user registers for Twitter it is particularly important to recommend producers to them immediately after sign-up so they can start interacting with content that interests them. Since at this stage the user has not yet interacted with any producers, it is not possible to apply traditional recommendation techniques. However, before many users sign up for Twitter they interact with the platform in an indirect way by visiting web domains that have embedded Twitter content, such as embedded Tweets or Timelines. We refer to these domains as Twitter for Websites (TFW) domains. Since this embedded content is often closely related to the content that Twitter producers create, we can use past TFW domain interactions to predict future affinities for Twitter producers.

We evaluate our model on the task of predicting producer affinity (target domain) from observed TFW domain affinity (auxiliary domain).

In order to reduce noise and maximize the consistency between interactions and affinity we require at least 40 interactions with both TFW domains and Twitter producers. All in all, our dataset contains $359,066$ users, $95,352$ TFW domains and $829,131$ producers. We hold out $10,000$ users for each of the validation and holdout sets.

\subsection{DBLP Experiment}
The DBLP citation network contains information about articles published in academic venues over the past several decades. We use each authors’ co-author publication counts before $2013$ (auxiliary domain) to predict their post $2013$ conference publications (target domain). We consider each of the $25,210$ authors with at least 10 publications both before and after $2013$ to be a "user", each of the $507,516$ coauthors that these authors published with before $2013$ to be an auxiliary domain item, and each of the $3,070$ conferences that these authors published in after $2013$ to be a target domain item. We hold out $5,000$ users for each of the validation and holdout sets.

\subsection{Evaluation}

We find that on both datasets and over all metrics ImplicitCE trained with SCU significantly outperforms all of the baseline models (Table ~\ref{table:CrossModelTable}). Furthermore, we find that SCU significantly outperforms the BPR and MSE loss functions (Table ~\ref{table:CrossModelTable}). Among the baseline models, we find that the CD-CCA and EMCDR models significantly outperform the matrix factorization models on the Twitter dataset and slightly outperform them on the DBLP dataset. This makes sense, given that these models more directly solve the problem and are capable of modeling the auxiliary and target domain entities separately.

Our top performing ImplicitCE architecture is a two layer neural network with $1024$ units per layer, batch normalization and a $relu$ activation function. We trained the model with an Adam optimizer with a learning rate of $0.05$, a dropout rate of $0.3$, an $L_2$ weight penalty of $0.001$, and a cosine embedding similarity function. For SCU, we used $N_{S_I} = 1000$ and $N_{S_U} = 64$. We observed that although replacing the neural network with a linear model does reduce performance, the effect is not dramatic (Table ~\ref{table:CrossLayerTable}). 

\section{Twitter Topic Prediction Experiment}

The co-embeddings that ImplicitCE generates are powerful generic representations of user preferences in the target domain. This is especially useful when a user is new to the system that hosts the target domain items and the user's auxiliary domain interactions are the only user information that the system has access to. We can see this more clearly by stating the co-embedding property for dot product similarity in a different way: constructing a target domain co-embedding is equivalent to constructing user embeddings such that for each item $b_i$, the performance of a linear model trained on these embeddings to predict user affinity towards $b_i$ is maximized. This property suggests that these embeddings may also be strong low dimensional user representations for tasks that are similar to predicting target domain item affinity, such as categorizing users, serving advertisements or predicting user demographics. We evaluate this in the following task.
 
Some Twitter users have chosen to directly indicate which topics interest them. We can evaluate the ability of the user embeddings that ImplicitCE generates to serve as efficient representations of users' Twitter preferences by training a logistic regression model on them to predict these interest topic selections.

In this task we use a variety of methods to transform users' TFW Domain affinities into user embeddings and train logistic regression models on these embeddings to predict ``indicated''/``did not indicate'' for each of the ``Sports'', ``Music'', ``Entertainment'', ``Government \& Politics'', ``News'', and categories. We use a small dataset of $3000$ users and perform $20$-fold cross validation over these users. We quantify the model's performance with the mean and 95\% confidence interval of the cross validation ROC-AUC for each topic.

Since the topic prediction task requires the model to predict users' actions within the Twitter platform, it's possible that a method that generates embeddings that are finely tuned to reflect users' affinities within the Twitter platform is particularly useful for this task, especially since patterns of web domain affinity are significantly different from patterns of Twitter account affinity. For example, a particular web domain likely hosts content that appeals to a wider variety of viewpoints and interests than a particular Twitter account. Therefore, canonical low-dimensional representations of web domain content that are optimized for a web domain reconstruction objective may be suboptimal for a Twitter interest prediction task.

As a baseline we generate user embeddings by either factorizing the user-web domain interaction matrix with the SVD and ALS algorithms or compressing the user-web domain interaction matrix with an Autoencoder. We train all three models on the full web domain interaction dataset to generate embeddings of size $300$. These models attempt to generate representations that contain the maximum amount of information about a user's preferences for web domains, but do not attempt to represent user preferences for items within Twitter. We include the embeddings generated by the EMCDR model as an additional baseline.
We observe that for five out of the six topics, models trained on our embeddings outperform all baseline models (Table ~\ref{table:TTTTable}).

\section{Conclusion}
In this work, we present a novel algorithm and training strategy for recommending target domain items to users based on their historical interactions with auxiliary domain items. Our training strategy allows us to directly maximize the correlation between our model's predictions and user preferences, and our model's embedding structure allows us to generate recommendations in live production with minimal computational cost.

% \clearpage
\def\bibfont{\footnotesize}
\bibliography{my_bibliography}

\begin{thebibliography}{}

\bibitem[\protect\citeauthoryear{Bell and Koren}{2007}]{neighborhood1}
Bell, R.~M., and Koren, Y.
\newblock 2007.
\newblock Scalable collaborative filtering with jointly derived neighborhood
  interpolation weights.
\newblock In {\em Proceedings of the 2007 Seventh IEEE International Conference
  on Data Mining}, ICDM '07,  43--52.
\newblock Washington, DC, USA: IEEE Computer Society.

\bibitem[\protect\citeauthoryear{Bochi}{2017}]{alsfoldin}
Bochi, J.
\newblock 2017.
\newblock Recommending github repositories with google bigquery and the
  implicit library.

\bibitem[\protect\citeauthoryear{Chen, Hsu, and Lee}{2013}]{socialtensor}
Chen, W.; Hsu, W.; and Lee, M.~L.
\newblock 2013.
\newblock Making recommendations from multiple domains.
\newblock In {\em Proceedings of the 19th ACM SIGKDD International Conference
  on Knowledge Discovery and Data Mining}, KDD '13,  892--900.
\newblock New York, NY, USA: ACM.

\bibitem[\protect\citeauthoryear{Fahlman and Lebiere}{1990}]{cascadecorr}
Fahlman, S.~E., and Lebiere, C.
\newblock 1990.
\newblock Advances in neural information processing systems 2.
\newblock San Francisco, CA, USA: Morgan Kaufmann Publishers Inc.
\newblock chapter The Cascade-correlation Learning Architecture,  524--532.

\bibitem[\protect\citeauthoryear{Fisher and others}{1921}]{fisher1921}
Fisher, R.~A., et~al.
\newblock 1921.
\newblock 014: On the" probable error" of a coefficient of correlation deduced
  from a small sample.

\bibitem[\protect\citeauthoryear{Fisher}{1915}]{fisher1915}
Fisher, R.~A.
\newblock 1915.
\newblock {Frequency Distribution of the Values of the Correlation Coefficient
  in Samples from an Indefinitely Large Population}.
\newblock {\em Biometrika} 10(4):507--521.

\bibitem[\protect\citeauthoryear{He \bgroup et al\mbox.\egroup }{2017}]{ncf}
He, X.; Liao, L.; Zhang, H.; Nie, L.; Hu, X.; and Chua, T.-S.
\newblock 2017.
\newblock Neural collaborative filtering.
\newblock In {\em Proceedings of the 26th International Conference on World
  Wide Web}, WWW '17,  173--182.
\newblock Republic and Canton of Geneva, Switzerland: International World Wide
  Web Conferences Steering Committee.

\bibitem[\protect\citeauthoryear{Hu, Koren, and Volinsky}{2008}]{implicit}
Hu, Y.; Koren, Y.; and Volinsky, C.
\newblock 2008.
\newblock Collaborative filtering for implicit feedback datasets.
\newblock In {\em Proceedings of the 2008 Eighth IEEE International Conference
  on Data Mining}, ICDM '08,  263--272.
\newblock Washington, DC, USA: IEEE Computer Society.

\bibitem[\protect\citeauthoryear{Koren}{2008}]{svdpp}
Koren, Y.
\newblock 2008.
\newblock Factorization meets the neighborhood: A multifaceted collaborative
  filtering model.
\newblock In {\em Proceedings of the 14th ACM SIGKDD International Conference
  on Knowledge Discovery and Data Mining}, KDD '08,  426--434.
\newblock New York, NY, USA: ACM.

\bibitem[\protect\citeauthoryear{Li, 0001, and Xue}{2009}]{transfergenerative}
Li, B.; 0001, Q.~Y.; and Xue, X.
\newblock 2009.
\newblock Transfer learning for collaborative filtering via a rating-matrix
  generative model.
\newblock In Danyluk, A.~P.; Bottou, L.; and Littman, M.~L., eds., {\em ICML},
  volume 382 of {\em ACM International Conference Proceeding Series}, ~78.
\newblock ACM.

\bibitem[\protect\citeauthoryear{Li, Kawale, and Fu}{2015}]{autoencodercf}
Li, S.; Kawale, J.; and Fu, Y.
\newblock 2015.
\newblock Deep collaborative filtering via marginalized denoising auto-encoder.
\newblock In {\em Proceedings of the 24th ACM International on Conference on
  Information and Knowledge Management}, CIKM '15,  811--820.
\newblock New York, NY, USA: ACM.

\bibitem[\protect\citeauthoryear{Man \bgroup et al\mbox.\egroup
  }{2017}]{embeddingtoembedding}
Man, T.; Shen, H.; Jin, X.; and Cheng, X.
\newblock 2017.
\newblock Cross-domain recommendation: An embedding and mapping approach.

\bibitem[\protect\citeauthoryear{mic}{2015}]{microsoftdeepcrossdomain}
2015.
\newblock A multi-view deep learning approach for cross domain user modeling in
  recommendation systems.
\newblock In {\em Proceedings of the 24th International Conference on World
  Wide Web}, WWW '15,  278--288.
\newblock Republic and Canton of Geneva, Switzerland: International World Wide
  Web Conferences Steering Committee.

\bibitem[\protect\citeauthoryear{Mikolov \bgroup et al\mbox.\egroup
  }{2013}]{word2vec}
Mikolov, T.; Chen, K.; Corrado, G.; and Dean, J.
\newblock 2013.
\newblock Efficient estimation of word representations in vector space.
\newblock {\em CoRR} abs/1301.3781.

\bibitem[\protect\citeauthoryear{Resnick \bgroup et al\mbox.\egroup
  }{1994}]{neighborhoodcorrelation1}
Resnick, P.; Iacovou, N.; Suchak, M.; Bergstrom, P.; and Riedl, J.
\newblock 1994.
\newblock Grouplens: An open architecture for collaborative filtering of
  netnews.
\newblock In {\em Proceedings of the 1994 ACM Conference on Computer Supported
  Cooperative Work}, CSCW '94,  175--186.
\newblock New York, NY, USA: ACM.

\bibitem[\protect\citeauthoryear{Sahebi and
  Brusilovsky}{2013}]{userprofilesize}
Sahebi, S., and Brusilovsky, P.
\newblock 2013.
\newblock Cross-domain collaborative recommendation in a cold-start context:
  The impact of user profile size on the quality of recommendation.

\bibitem[\protect\citeauthoryear{Sahebi, Brusilovsky, and
  Bobrokov}{2017}]{ccacold}
Sahebi, S.; Brusilovsky, P.; and Bobrokov, V.
\newblock 2017.
\newblock Cross-domain recommendation for large-scale data.

\bibitem[\protect\citeauthoryear{Salakhutdinov and Mnih}{2007}]{pmf}
Salakhutdinov, R., and Mnih, A.
\newblock 2007.
\newblock Probabilistic matrix factorization.
\newblock In {\em Proceedings of the 20th International Conference on Neural
  Information Processing Systems}, NIPS'07,  1257--1264.
\newblock USA: Curran Associates Inc.

\bibitem[\protect\citeauthoryear{Salakhutdinov, Mnih, and Hinton}{2007}]{rbmcf}
Salakhutdinov, R.; Mnih, A.; and Hinton, G.
\newblock 2007.
\newblock Restricted boltzmann machines for collaborative filtering.
\newblock In {\em Proceedings of the 24th International Conference on Machine
  Learning}, ICML '07,  791--798.
\newblock New York, NY, USA: ACM.

\bibitem[\protect\citeauthoryear{Sarwar \bgroup et al\mbox.\egroup
  }{2001}]{neighborhood2}
Sarwar, B.; Karypis, G.; Konstan, J.; and Riedl, J.
\newblock 2001.
\newblock Item-based collaborative filtering recommendation algorithms.
\newblock In {\em Proceedings of the 10th International Conference on World
  Wide Web}, WWW '01,  285--295.
\newblock New York, NY, USA: ACM.

\bibitem[\protect\citeauthoryear{Sarwar \bgroup et al\mbox.\egroup
  }{2002}]{svdfoldin}
Sarwar, B.; Karypis, G.; Konstan, J.; and Riedl, J.
\newblock 2002.
\newblock Incremental singular value decomposition algorithms for highly
  scalable recommender systems.
\newblock In {\em Fifth International Conference on Computer and Information
  Science},  27--28.

\bibitem[\protect\citeauthoryear{Shardanand and
  Maes}{1995}]{neighborhoodcorrelation2}
Shardanand, U., and Maes, P.
\newblock 1995.
\newblock Social information filtering: Algorithms for automating word of
  mouth.
\newblock In {\em Proceedings of the SIGCHI Conference on Human Factors in
  Computing Systems}, CHI '95,  210--217.
\newblock New York, NY, USA: ACM Press/Addison-Wesley Publishing Co.

\bibitem[\protect\citeauthoryear{Shi, Larson, and Hanjalic}{2011}]{tagcdcf}
Shi, Y.; Larson, M.; and Hanjalic, A.
\newblock 2011.
\newblock {\em Tags as Bridges between Domains: Improving Recommendation with
  Tag-Induced Cross-Domain Collaborative Filtering}.
\newblock Berlin, Heidelberg: Springer Berlin Heidelberg.
\newblock  305--316.

\bibitem[\protect\citeauthoryear{Singh and
  Gordon}{2008}]{collectivematrixfactorization}
Singh, A.~P., and Gordon, G.~J.
\newblock 2008.
\newblock Relational learning via collective matrix factorization.
\newblock In Li, Y.; Liu, B.; and Sarawagi, S., eds., {\em KDD},  650--658.
\newblock ACM.

\bibitem[\protect\citeauthoryear{Wang, Wang, and Yeung}{2015}]{cdlcf}
Wang, H.; Wang, N.; and Yeung, D.-Y.
\newblock 2015.
\newblock Collaborative deep learning for recommender systems.
\newblock In {\em Proceedings of the 21th ACM SIGKDD International Conference
  on Knowledge Discovery and Data Mining}, KDD '15,  1235--1244.
\newblock New York, NY, USA: ACM.

\bibitem[\protect\citeauthoryear{Wu \bgroup et al\mbox.\egroup
  }{2017}]{starspace}
Wu, L.; Fisch, A.; Chopra, S.; Adams, K.; Bordes, A.; and Weston, J.
\newblock 2017.
\newblock Starspace: Embed all the things!
\newblock {\em CoRR} abs/1709.03856.

\end{thebibliography}
\bibliographystyle{aaai19}

\end{document}